
\documentclass[conference,a4paper]{IEEEtran}
%


\addtolength{\topmargin}{9mm}
\usepackage[utf8]{inputenc} 
\usepackage[T1]{fontenc}

\usepackage{ifpdf}

%
\usepackage{cite}

%
\ifCLASSINFOpdf
  \usepackage[pdftex]{graphicx}
\else
\fi
%
%

%
\usepackage[cmex10]{amsmath}
\usepackage{amsfonts}
\usepackage{amssymb}
\usepackage{amsthm}
\usepackage{color}
\interdisplaylinepenalty=2500

\usepackage{algorithmic}
\usepackage{array}
\usepackage{url}
\newtheorem{theorem}{Theorem}
\newtheorem{definition}{Definition}
\newtheorem{example}{Example}
\newtheorem{solution}{Solution}

\newtheorem{corollary}{Corollary}
\newtheorem{lemma}{Lemma}
\newtheorem{remark}{Remark}

\newcommand{\bx}{\mathbf{x}}
\newcommand{\cA}{\mathcal{A}}

\newcommand{\cP}{\mathcal{P}}
\newcommand{\cQ}{\mathcal{Q}}

\newcommand{\cB}{\mathcal{B}}
\newcommand{\cR}{\mathcal{R}}

\newcommand{\bV}{\mathbf{V}}
\newcommand{\bX}{\mathbf{X}}
\newcommand{\bo}[1]{\mathbf{#1}}
\newcommand{\ca}[1]{\mathcal{#1}}



\begin{document}
%
\title{Communication Complexity of $\min(X_1,X_2)$ with an Application to the Nearest Lattice Point Problem}
\author{\IEEEauthorblockN{Vinay A. Vaishampayan}
\IEEEauthorblockA{Dept. of Engineering Science and Physics\\City University of New York-College of Staten Island\\Staten Island, NY USA}
}
\maketitle

\begin{abstract}
Upper bounds on the communication complexity of finding the nearest lattice point in a given lattice $\Lambda \subset \mathbb{R}^2$ was considered in earlier works~\cite{VB:2017}, for a two party, interactive communication model. Here we derive a lower bound on the communication complexity of a key step in that procedure. Specifically, the problem considered is that of interactively finding $\min(X_1,X_2)$, when $(X_1,X_2)$ is uniformly distributed on the unit square.
A lower bound is derived on the single-shot interactive communication complexity and shown to be tight. This is accomplished by characterizing the  constraints placed on the partition generated by an interactive code and exploiting a self similarity property of an optimal solution. 
\end{abstract}

{\small \textbf{\textit{Index terms}---Lattices, lattice quantization, interactive communication, communication complexity, distributed function computation, Voronoi cell, rectangular partition.}}


%
\IEEEpeerreviewmaketitle
\section{Introduction}

The \emph{communication complexity} (CC) of function computation is the minimum amount of information that must be communicated in order to compute a function of several variables with the underlying model that each variable is available to a distinct party~~\cite{Yao:1979},~\cite{KushNis:1997}. In a typical two-party setup~\cite{AhlCai:1994}, given alphabets $\ca{X}_1$,  $\ca{X}_2$ and $\ca{Z}$, and function $f~:~\ca{X}_1 \times \ca{X}_2 \rightarrow \ca{Z}$,   with each party having access to a block of observations represented as row vectors $\bx_1=(x_{11},x_{12},\ldots,x_{1k})$ and $\bx_2=(x_{21},x_{22},\ldots,x_{2k})$, respectively,  the objective is to determine the minimum amount of communication required so that each party can determine $(f(x_{11},x_{21}),f(x_{12},x_{22}),\ldots,f(x_{1k},x_{2k}))$ without error.  The case $k=1$ is referred to as the single-shot case, in which the objective is determine the minimum communication required to compute $f(x_1,x_2)$ (we drop the second subscript when $k=1$). Typical information theoretical results are obtained in the limit as $k\rightarrow \infty$. Let $\bx=[\bx_1,\bx_2]$ denote the $2\times k$ matrix with $i$th row $\bx_i$.

Given a lattice~\footnote{A lattice is a discrete additive subgroup of $\mathbb{R}^n$. The reader is referred to \cite{SPLAG} for details.} $\Lambda \subset \mathbb{R}^n$, the closest lattice point problem is to find for each $\bx=[x_1,x_2,\ldots,x_n]\in \mathbb{R}^n$, the point $\lambda_v(\bx)$ which minimizes
the Euclidean distance $\|x-\lambda\|$, $\lambda \in \Lambda$.  The Voronoi partition is the partition of $\mathbb{R}^n$ created by mapping $\bx$ to $\lambda_v(\bx)$. 

An upper bound on the CC of an approximate  nearest lattice point problem  was derived in~\cite{BVC:2017}, and an upper bound on the CC of 
transforming a nearest-plane or Babai partition to the Voronoi partition of $\mathbb{R}^2$ was derived in~\cite{VB:2017} both for the two-party single-shot  case ($\bx=[x_1,x_2]$). 
 An important step in that upper bound required the solution of the following problem: Two independent random variables, $X_1$ and $X_2$ have uniform marginal distributions on the unit interval $[0,1]$. How many bits must be exchanged on average in order to determine whether $X_1 < X_2$, or otherwise. In~\cite{VB:2017} an algorithm was presented that solved this using $R=4$ bits. Here, we show that this is optimal by deriving a lower bound on the amount of communication required. Bounds of this kind are referred to as single-shot converses in the information theory and computer science literature. 

The remainder of the paper is organized as follows. A brief review of relevant literature is in Sec.~\ref{sec:previous}, results needed for this paper from \cite{VB:2017} are in Sec.~\ref{sec:previous},  the entropy of the partition created by an infinite round algorithm is presented in Sec.~\ref{sec:infent}.  The main result, the single shot-converse is derived in Sec.~\ref{sec:converse}. Summary and conclusions are in Sec.~\ref{sec:summary}

\section{Previous Work}
\label{sec:previous}
Early information theoretic work on communication complexity for distributed function computation includes~\cite{Yam:1982},~\cite{AhlCai:1994}. Communication complexity for interactive communication is considered for worst case in~\cite{Orlitsky:1990}  and average case in \cite{Orlitsky:1992} where bounds on the communication rate are obtained in terms of specific graphs associated with the joint distribution~\cite{Wit:1976}.  A recent contribution shows the strict benefit of interactive communication for computing the Boolean AND function~\cite{MaIshwar:2009},~\cite{MaIshwar:2011}. A review of interactive communication and a discussion of open problems is in~\cite{Braverman:2012}. Most of the results obtained are for discrete alphabet sources. For continuous alphabet sources,  quantization for distributed function computation has been studied in~\cite{Misra:2011}. Converse results are rare in the quantization literature.    A recent converse result for entropy constrained scalar quantization is \cite{Koch:2016}. 

\section{The Bit-Exchange Protocol}
\label{sec:bitexchange}
Assume that the generator matrix $V$ of $\Lambda \subset \mathbb{R}^2$ has the upper triangular form $$\bV=\begin{pmatrix} 1 & \rho \cos \theta \\ 0 & \rho \sin \theta\end{pmatrix}$$ where the columns of $V$ are basis vectors for the lattice.  In~\cite{BVC:2017}, we computed an upper bound for the communication complexity of computing a Babai partition, which is a partition of $\mathbb{R}^2$ into rectangular Babai cells. A Babai cell tiles $\mathbb{R}^2$ , under the action of lattice translations, just as a Voronoi cell does. 

Here we briefly explain the construction in~\cite{VB:2017} for transforming a Babai partition to the Voronoi partition. At the start of the algorithm, both nodes know that $\bX=(X_1,X_2)$ lies in a Babai cell, which is the largest rectangular region in Fig.~\ref{fig:pi}. The objective is to assign some of the points in the Babai cell to neighboring lattice points, or equivalently to repartition the Babai cell with the boundaries of the Voronoi cell. This is accomplished by partitioning the Babai cell into seven rectangular sub-rectangles, three of which are error-free (they are entirely contained in the Voronoi cell for the origin) and four non-error-free rectangles (whose interior is  intersected by the boundary of the Voronoi cell). 

\begin{figure}[h!]
\begin{center}
		\includegraphics[height=3.7cm]{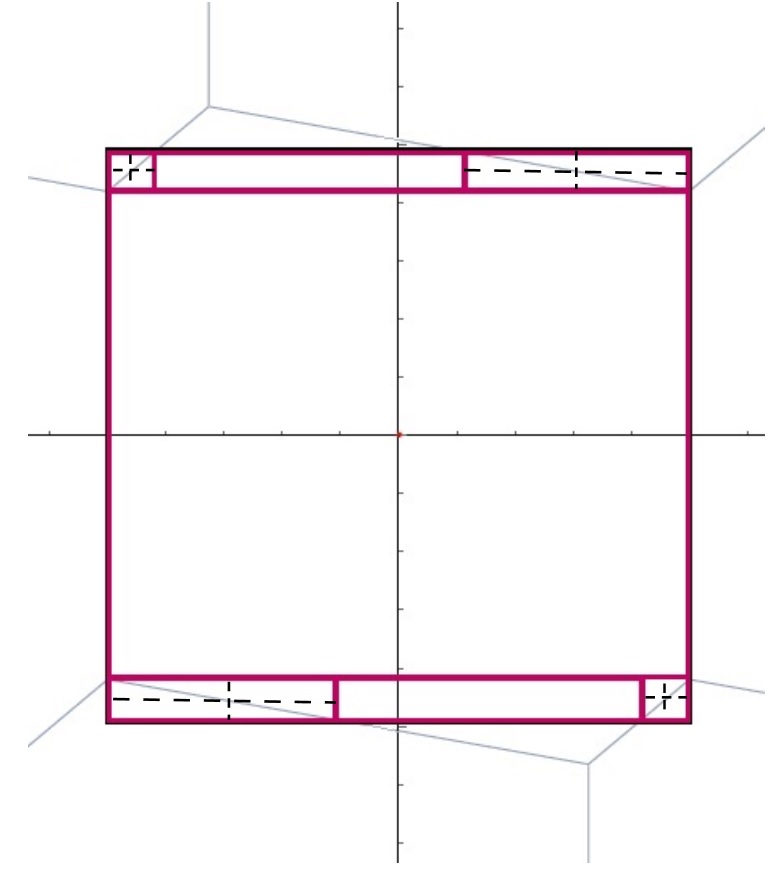}  
\caption{Voronoi partition (gray lines), one cell of the Babai partition (largest rectangle with red solid lines), rectangular partition of the Babai cell after the first round of communication (solid red lines) and rectangular partition after  the second round of communication (dashed lines). (from~\cite{VB:2017})}
\label{fig:pi}
\end{center}
\end{figure}

The cost of reconfiguring the partition with zero error is given by the following theorem~\cite{VB:2017}. A round refers to two messages, one from each node.
\begin{theorem}\cite{VB:2017}
For the interactive model with unlimited rounds of communication, a nearest plane partition can be transformed into the Voronoi partition using,  on average, a finite number of bits ($\bar{R}$) and rounds ($\bar{N}$) of communication. Specifically, 
\begin{equation}\bar{R}=H(Q)+(1-Q_0)H(P)+4(1-P_0)(1-Q_0) 
\label{eqn:avrateinf} 
\end{equation}
and
$$\bar{N}=1+2(1-P_0)(1-Q_0).$$
\end{theorem}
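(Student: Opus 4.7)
The plan is to unfold the protocol of~\cite{VB:2017} as a decision tree and express $\bar{R}$ and $\bar{N}$ by the tower property, then close the recursion using the self-similarity of the residual sub-problem. Let $Q$ denote the random label identifying which of the seven sub-rectangles of the Babai cell contains $\bX$ after round~1, with $Q_0$ the total probability of the three error-free sub-rectangles (on whose event the protocol halts). Conditioned on the complementary event, let $P$ denote the finer label produced by round~2, and $P_0$ the conditional probability that round~2 places $\bX$ in one of the error-free refinements shown by the dashed lines in Fig.~\ref{fig:pi}.

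First I would verify that round~1 always happens and transmits $Q$ (cost $H(Q)$ bits, one round). On the event $\{Q \notin \text{error-free}\}$, which has probability $1-Q_0$, round~2 is invoked and transmits $P$ (cost $H(P)$ bits, conditionally). Summing these two contributions by the tower property yields the first two terms of \eqref{eqn:avrateinf}, namely $H(Q)+(1-Q_0)H(P)$.

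Next I would analyze the residual event $\{Q \notin \text{error-free},\,P \notin \text{error-free}\}$, which occurs with probability $(1-Q_0)(1-P_0)$. The geometry in Fig.~\ref{fig:pi} shows that on this event the only remaining ambiguity is on which side of a single diagonal line (the relevant piece of the Voronoi boundary) the point $\bX$ lies, restricted to a rectangle on which $\bX$ is conditionally uniform. After an affine normalization, this reduces \emph{exactly} to the problem of deciding $X_1<X_2$ for $(X_1,X_2)$ uniform on the unit square. Invoking the $R=4$-bit protocol of~\cite{VB:2017} for this sub-problem accounts for the term $4(1-P_0)(1-Q_0)$ in $\bar R$, and invoking the corresponding expected round count of that protocol, which is $2$, accounts for the term $2(1-P_0)(1-Q_0)$ in $\bar N$, together with the $+1$ from round~1 (round~2 being conducted jointly as the first round of the $\min$ subroutine rather than counted separately).

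The main obstacle is justifying the last step rigorously: one must argue both that the residual sub-problem really is affinely equivalent to $\min(X_1,X_2)$ on the unit square (so that the uniform distribution on the residual rectangle produces the same protocol tree statistics), and that the rounds of the $\min$ subroutine can be merged with round~2 so that the total round count collapses to $1+2(1-Q_0)(1-P_0)$ rather than an additional $+(1-Q_0)$. This second point relies on a self-similarity property of the optimal $\min$ protocol: its recursion $\bar R_{\min} = c + (1-p)\bar R_{\min}$ and $\bar N_{\min} = 1 + (1-p)\bar N_{\min}$ for the appropriate per-round constant $c$ and termination probability $p$ fixes $\bar R_{\min}=4$ and $\bar N_{\min}=2$, and this same self-similarity is what enables the later single-shot converse proved in Sec.~\ref{sec:converse} of this paper.
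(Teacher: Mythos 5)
The paper does not actually prove this theorem: it is cited verbatim from \cite{VB:2017}, and the text that follows merely glosses the terms of \eqref{eqn:avrateinf}. So there is no proof in this paper to reproduce; what one can check your sketch against is that gloss, the round convention stated just above the theorem (``a round refers to two messages, one from each node''), and the formula for $\bar N$.

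Against those, your bookkeeping has a concrete error. The paper explicitly says ``the first two terms come from the first round of communication,'' i.e. in round~1 node~1 sends the message labeled $Q$ and node~2 replies with the message labeled $P$ (needed only with probability $1-Q_0$), giving $H(Q)+(1-Q_0)H(P)$ bits all within round~1. You instead assign $(1-Q_0)H(P)$ to a separate round~2, which is inconsistent with the two-message-per-round convention and forces you to invent the ad hoc merger ``round~2 being conducted jointly as the first round of the $\min$ subroutine.'' That patch is not needed and not correct: once $Q$ and $P$ are both counted in round~1, the round formula $\bar N = 1 + 2(1-P_0)(1-Q_0)$ is immediate, with the $2$ being the expected number of rounds of the $\Pi_{\min}(2)$ bit-exchange protocol (termination probability $1/2$ per round) and $(1-P_0)(1-Q_0)$ the probability that the residual diagonal-ambiguity subproblem is entered. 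With that correction, the remainder of your outline --- the tower decomposition, the affine reduction of the residual rectangle to $\min(X_1,X_2)$ on the unit square, and the geometric-series closure $\bar R_{\min}=4$, $\bar N_{\min}=2$ --- matches the explanation the paper gives of the terms. One further caveat worth flagging: your sentence implying the self-similarity recursion ``fixes'' $\bar R_{\min}=4$ as if it were an optimality statement is misleading at this point in the paper; here it is only an achievability figure for the specific bit-exchange protocol, and the matching lower bound is precisely what Sec.~\ref{sec:converse} sets out to prove.
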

Here $P$, $Q$ are  probability distributions and $P_0$, $Q_0$ are probabilities determined by the shapes of the Voronoi and Babai cells---more details are in \cite{VB:2017}. The individual terms in (\ref{eqn:avrateinf}) are best explained by Fig.~\ref{fig:pi}. The first two terms come from the first round of communication. The last term comes from partitioning the four partition cells that cause errors (i.e. whose interiors intersect the Voronoi partition boundary). 
We  take a closer look at the last term in (\ref{eqn:avrateinf}), which is the cost of partitioning a rectangular region into two triangular regions.  This is the problem of finding the minimum of two independent random variables uniformly distributed on the unit interval. We refer to this problem  as the $\Pi_{\min}(2)$ problem.

Our construction  achieves an average cost of \emph{four} bits for constructing such a refinement. This is accomplished by constructing binary expansions for each $X_i$ (after a suitable shift and rescaling) and sequentially exchanging bits until the two bits differ. Thus if node-1 has bit string 00100001... and node-2 has bit string 00101001..., then five rounds of communication occur after which both nodes know that $X_1 < X_2$.  We  show that \emph{four} bits are optimal on average.

\section{Interactive Communication and Entropy of a Partition}
\label{sec:infent}
We now analyze the interactive model in which an infinite number of communication rounds are allowed. Here we explain the setup and prove a basic theorem regarding the sum rate of an interactive code. 

The setup for the interactive code is as follows. We are given two independent random variables $X_1$ and $X_2$ with known joint probability distribution and the objective is to compute $f(x_1,x_2)$ interactively. 
Let $U_0$ be a constant random variable.  Communication proceeds in rounds according to a pre-arranged protocol and each round consists of at most two steps (messages). In the $i$th step, $i \geq 1$, $i$ odd,  node $X_1$ sends message $U_i$ to node $X_2$ and for $i$ even, $i>1$, node $X_2$ sends message $U_i$ to $X_1$. For $i$ odd, $U_i$ depends on $X_1$ and $U^{i-1}:=(U_0,U_1,\ldots,U_{i-1})$ and for $i$ even, $U_i$ depends on $X_2$ and $U^{i-1}$, thus obeying the Markov conditions $U_i-(X_1,U^{i-1})-X_2$ for $i$ odd and $X_1-(U^{i-1},X_2)-U_i$ for $i$ even. The algorithm stops after concluding the $T$th step, if both nodes can determine $f(X_1,X_2)$, based on their private information ($X_1$ or $X_2$) and the communication transcript $U^T$. We can think of $T$ as a conditional stopping time relative to $(U_0,U_1,\dots)$ with side information $X_1$ and $X_2$ which obeys the two Markov conditions $(X_1,X_2)-(X_1,U^T)-T$ and $(X_1,X_2)-(X_2,U^T)-T$.

The sum rate, $R_{sum}$, is given by 
\begin{eqnarray}
R_{sum}= & \sum_{i=1,~i~odd}^TH(U_i|U^{i-1},X_2)+ \nonumber \\
&  ~~+\sum_{i=1,~i~even}^TH(U_i|U^{i-1},X_1).
\end{eqnarray}

The following theorem allows us to write
\begin{eqnarray}
R_{sum} & =  & \sum_{i=1, i~odd}^TH(U_i|U^{i-1}) + \nonumber \\
& = & ~~+\sum_{i=1,~i~even}^TH(U_i|U^{i-1}) \nonumber \\
& = & H(U^T,T).
\end{eqnarray}

%
\begin{theorem}
Let random variables $X_1$ and $X_2$ be independent and $U_i,~i=1,2,\ldots$ satisfy $U_i-(X_1,U^{i-1})-X_2$ for $i$ odd, and $U_i-(X_2,U^{i-1})-X_1$ for $i$ even. Let $U_0$ be a  constant. Then $X_1-U^{i}-X_2$ and  $$H(U_i|U^{i-1},X_2)=H(U_i|U^{i-1}),~~i~\mbox{odd}$$ and $$H(U_i|U^{i-1},X_1)=H(U_i|U^{i-1}),~~i~\mbox{even}.$$
\end{theorem}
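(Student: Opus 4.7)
The plan is to proceed by induction on $i$, since the Markov structure is cumulative: once we know that $X_1$ and $X_2$ are conditionally independent given $U^{i-1}$, adding a message that depends on only one of them (and on $U^{i-1}$) should preserve that conditional independence.

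For the base case $i=0$, $U_0$ is constant, so the claim $X_1 - U^0 - X_2$ reduces to the independence of $X_1$ and $X_2$, which is given. Assume inductively that $X_1 - U^{i-1} - X_2$, i.e. $p(x_1,x_2 \mid u^{i-1}) = p(x_1 \mid u^{i-1}) p(x_2 \mid u^{i-1})$, and take $i$ odd (the even case is symmetric). The Markov condition $U_i - (X_1, U^{i-1}) - X_2$ gives $p(u_i \mid x_1, x_2, u^{i-1}) = p(u_i \mid x_1, u^{i-1})$. Combining these two factorizations, I would write
\begin{equation}
p(x_1, x_2, u_i \mid u^{i-1}) = p(x_1 \mid u^{i-1})\, p(x_2 \mid u^{i-1})\, p(u_i \mid x_1, u^{i-1}).
\end{equation}
Summing over $x_1$ on both sides, the factor $p(x_2 \mid u^{i-1})$ pulls out, and the remaining sum collapses to $p(u_i \mid u^{i-1})$. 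This yields $p(x_2, u_i \mid u^{i-1}) = p(x_2 \mid u^{i-1})\, p(u_i \mid u^{i-1})$, which is exactly the Markov chain $X_2 - U^{i-1} - U_i$, and equivalently $p(u_i \mid u^{i-1}, x_2) = p(u_i \mid u^{i-1})$. Taking conditional entropies of both sides over $X_2$ gives the desired identity $H(U_i \mid U^{i-1}, X_2) = H(U_i \mid U^{i-1})$.

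To close the induction, I still need $X_1 - U^i - X_2$. Dividing the joint factorization above by $p(u_i \mid u^{i-1})$ and regrouping,
\begin{equation}
p(x_1, x_2 \mid u^i) = \frac{p(x_1 \mid u^{i-1}) p(u_i \mid x_1, u^{i-1})}{p(u_i \mid u^{i-1})} \cdot p(x_2 \mid u^{i-1}) = p(x_1 \mid u^i)\, p(x_2 \mid u^{i-1}),
\end{equation}
and the previously established identity $p(x_2 \mid u^{i-1}) = p(x_2 \mid u^i)$ converts the right-hand side into the product $p(x_1 \mid u^i)\, p(x_2 \mid u^i)$, completing the induction.

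The only potential obstacle is purely bookkeeping: keeping track of which Markov condition is applied in which direction (the condition on $U_i$ versus the inductive hypothesis on $X_1, X_2$) and ensuring that the marginalization over $x_1$ uses only the Markov condition for $U_i$ given $X_1$ and $U^{i-1}$, not some stronger independence. There is no analytic difficulty; the argument is a careful manipulation of conditional probabilities, and the even-$i$ case follows verbatim by swapping the roles of $X_1$ and $X_2$.
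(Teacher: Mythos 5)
Your proof is correct, and it takes a genuinely different (and arguably cleaner) route from the paper. The paper works entirely at the level of entropies: it first proves $X_1 - U^i - X_2$ by induction via the chain rule $H(X_1,X_2|U^i)=H(X_1|U^i)+H(X_2|U^i,X_1)$ and the ``conditioning cannot increase entropy'' inequality, and then separately establishes $U_i \perp X_2 \mid U^{i-1}$ by comparing two chain-rule decompositions of $H(X_2,X_1,U_i|U^{i-1})$. You instead work directly with the probability mass functions: from the inductive hypothesis and the Markov condition you obtain the single factorization
\begin{equation}
p(x_1,x_2,u_i\mid u^{i-1})=p(x_1\mid u^{i-1})\,p(x_2\mid u^{i-1})\,p(u_i\mid x_1,u^{i-1}),
\end{equation}
and then both conclusions --- $U_i\perp X_2\mid U^{i-1}$ (hence the entropy identity) and $X_1-U^i-X_2$ (closing the induction) --- drop out of this one identity by marginalization and by dividing through by $p(u_i\mid u^{i-1})$. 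What the probability-level argument buys you is unification: you do not need to run two separate entropy manipulations, and you do not need the concavity-style inequality argument at all, since everything is an exact factorization. The paper's entropy-only proof has the stylistic advantage of staying in the information-theoretic idiom throughout, but your version is more elementary and more transparent about exactly which conditional independences are being created. One minor point to be careful about (you flagged it yourself): the division by $p(u_i\mid u^{i-1})$ in the last step is only valid on the support of $U^i$, but since we only need the conditional distributions where they are defined, this is a non-issue.
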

\begin{proof}
We first prove that  $X_1-U^{i}-X_2$, by induction. Clearly this holds for $i=0$, since $X_1, X_2$ are independent. Assume that $X_1-U^{i}-X_2$ holds for $i$ even. Then for $i$ odd
\begin{eqnarray}
\lefteqn{H(X_1,X_2|U^{i})  = } &  & \nonumber \\
&= &  H(X_1|U^{i})+H(X_2|U^{i},X_1) \nonumber \\
 & \stackrel{(a)}{=} &  H(X_1|U^i)+H(X_2|U^{i-1},X_1) \nonumber \\
 &\stackrel{(b)}{=} &  H(X_1|U^i)+H(X_2|U^{i-1}) \nonumber \\
 & \geq &   H(X_1|U^i)+H(X_2|U^{i}).
\end{eqnarray}
where (a) is by hypothesis, (b) is by the induction hypothesis.  Equality follows because 
 the reverse inequality is always true. A similar argument holds for $i$ even.
 
Now consider the two identities in the theorem.  Let $i$  be odd. Then
\begin{eqnarray}
\lefteqn{H(X_2,X_1,U_i|U^{i-1}) = } & & \nonumber \\
& = & H(X_2|X_1,U_i,U^{i-1})+H(X_1,U_i|U^{i-1}) \nonumber \\
& = & H(X_2|X_1,U^{i-1})+H(X_1,U_i|U^{i-1})
\label{eqn-first}
\end{eqnarray}
where the final identity follows due to a hypothesis in the theorem statement.
However
\begin{eqnarray}
\lefteqn{H(X_2,X_1,U_i|U^{i-1}) = } & & \nonumber \\
& = &  H(X_2|U^{i-1})+H(X_1,U_i|U^{i-1}).
\label{eqn-second}
\end{eqnarray}
Comparing (\ref{eqn-first}) and (\ref{eqn-second}) we see that for $i$ odd, $X_2$ and $U_i$ are conditionally independent given $U^{i-1}$. A similar proof follows for $i$ even.  
\end{proof}

\begin{corollary}
Let $p_i,i=1,2,\ldots$ denote the probability of the $i$th cell of the  rectangular partition constructed by the algorithm, let $p=(p_1,p_2,\ldots)$ and let $H(p)$ be its entropy. Since each $p_i$ is associated with a unique realization $(u_1,u_2,\ldots,u_T)$, it follows that 
$R_{sum}=H(p)$.
\label{cor:one}
\end{corollary}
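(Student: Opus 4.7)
The plan is to leverage the identity $R_{sum}=H(U^T,T)$ that the theorem already yields (displayed in the equation block immediately preceding the theorem), so the only thing left to establish is $H(U^T,T)=H(p)$, and this I would prove by a direct bijection argument between transcripts and cells.

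First, I would observe that the interactive protocol is deterministic once $(X_1,X_2)$ is given, so the transcript-plus-stopping-time $(U^T,T)$ is itself a (vector-valued) function of the input. By construction, two inputs produce the same realization $(u_1,\ldots,u_t)$ exactly when they reach the same leaf of the protocol tree, which is precisely the defining relation for membership in the same cell of the rectangular partition $\mathcal{P}$ that the algorithm constructs. Hence there is a bijection between the support of $(U^T,T)$ and the cells of $\mathcal{P}$.

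Under this bijection, the pushforward distribution of $(U^T,T)$ is exactly $p=(p_1,p_2,\ldots)$ by the definition of $p_i$, yielding $H(U^T,T)=H(p)$. Chaining with $R_{sum}=H(U^T,T)$ delivers $R_{sum}=H(p)$.

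The only subtle point I anticipate is verifying that the correspondence goes both ways: the hint ``each $p_i$ is associated with a unique realization $(u_1,\ldots,u_T)$'' asserts injectivity from cells to transcripts, and I would also need to check the reverse, namely that two distinct cells cannot produce the same transcript. This follows because otherwise the receiver at each step could not correctly decode the intended message from its private input together with $U^{i-1}$, contradicting the protocol's error-free termination. Once this double-injectivity is in place the entropy identity is immediate, and the corollary reduces to what is essentially a bookkeeping observation on top of the theorem.
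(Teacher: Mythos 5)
Your proof is correct and follows the same route the paper implicitly takes: combine $R_{sum}=H(U^T,T)$ from the preceding theorem with the observation that the partition cells are exactly the level sets of the transcript map, so $(U^T,T)$ and the cell index have the same distribution and hence the same entropy. The only minor overcomplication is your worry about the reverse direction of the bijection — since the cells of the partition are \emph{defined} as the sets of inputs producing identical transcripts, distinct cells cannot share a transcript by construction, and no appeal to decodability is needed.
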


\begin{remark}
Each cell of the  partition of $[0,1]\times[0,1]$ constructed by the previously described model for  interactive communication    is a Cartesian product $\cA_1 \times \cA_2$, where $\cA_1$ and $\cA_2$ are subsets of $[0,1]$, which in general depend on the cell of the partition.
\end{remark}

\section{Single-Shot Converses }
\label{sec:converse}
The problem that we consider is as follows. Independent random variables $X_1$ and $X_2$ are uniformly distributed on the unit interval $(0,1)$, $X_1$ is observed at Node-1 and $X_2$ is observed at Node-2. The nodes exchange information in order to compute $f(\bo{X})$, where $\bo{X}=(X_1,X_2)$.   Communication between the nodes proceeds interactively using a predetermined strategy.

We begin by illustrating a converse with a simple example. 
\begin{example}
\label{ex:one}
Let $f$ be given by
\begin{equation}
f(\bx)=\left\{\begin{array}{cc}
1, & x_1 >1/2,~ x_2 > 1/2 \nonumber \\
0, & \mbox{otherwise}.
\end{array} \right.
\end{equation}
 as illustrated in Fig.~\ref{fig:example1}.
Find a lower bound on the  sum-rate for computing  $f$  interactively using an unbounded number of rounds of communication.  

\begin{figure}[h] 
   \centering
   \begin{tabular}{cc}
   \includegraphics[width=1.0in]{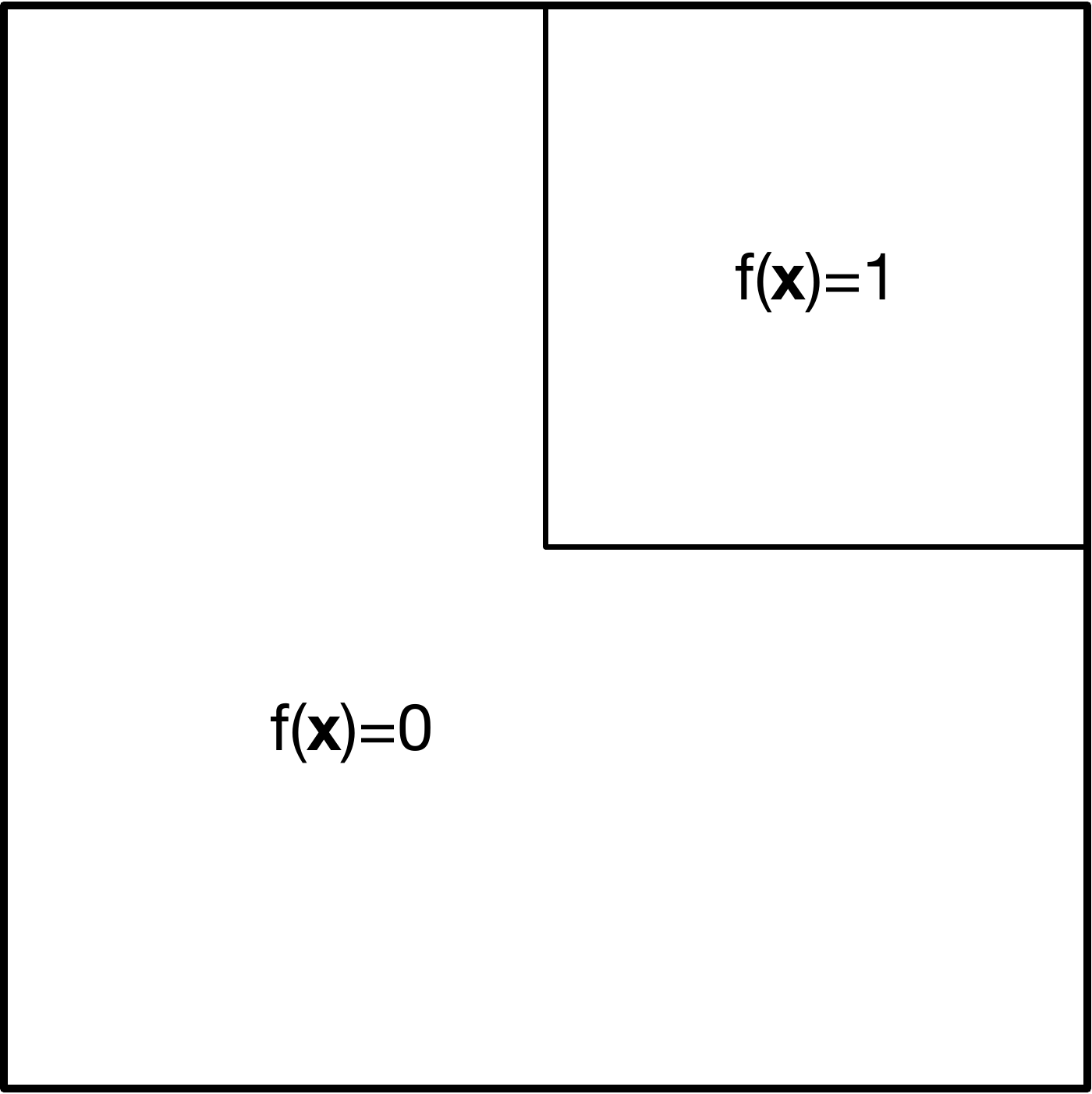} & \includegraphics[width=1.0in]{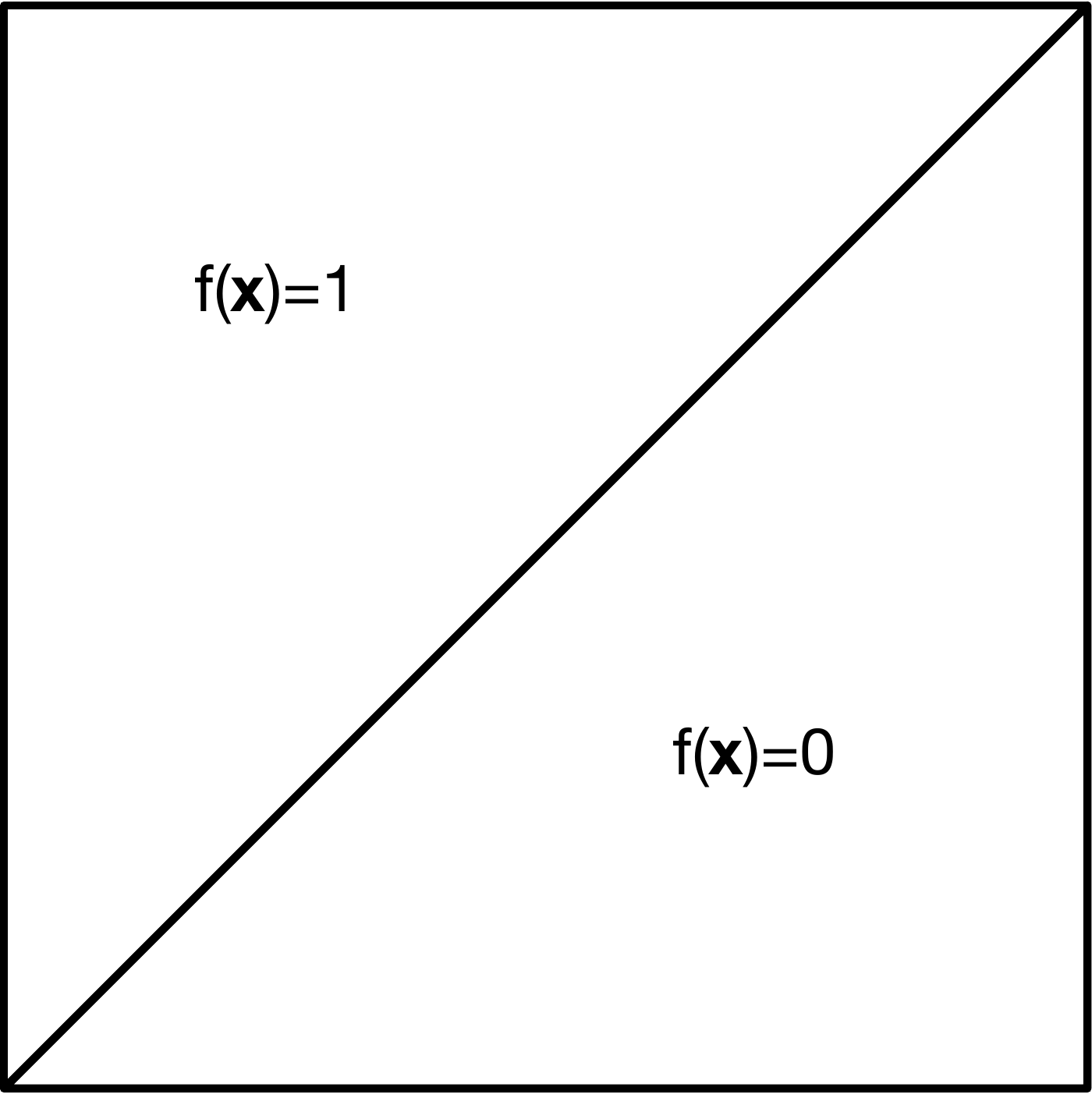}  \\
   (a) & (b) 
   \end{tabular}
   \caption{The unit square $[0,1]\times[0,1]$ and (a) the function $f$ used in Ex.~\ref{ex:one} (b) $f$  for $\Pi_{min}(2)$.}
   \label{fig:example1}
\end{figure}
\end{example}

\begin{solution}{\rm
An interactive algorithm creates a rectangular partition. We use the term partition  in the sense that the interior of the cells of the partition are assumed to be non-overlapping, while the union of the closure of the partition cells is the unit square $[0,1]^2$. The cells of any zero-error partition, i.e. a partition that achieves a zero probability of error fall into one of two sub-partitions, $\cP$ whose cells partition the set $\cR_p=\{\bx~:~f(\bx)=1\}$ and $\cQ$ whose cells partition the set $\cR_q=\{\bx~:~f(\bx)=0\}$.  Let $p_i$, $q_i$ denote the probability of the $i$th cell of $\cP$, $\cQ$, respectively. From the geometry of the problem, the constraint set $\Xi$ is defined by the following self-evident constraints: (i) $\sum_{j=1}^mp_{i_j} \leq 1/4$, $m=1,2,\ldots,$ for any subsequence $i_j$, and  (ii) $q_{i} \leq 1/2$, $i=1,2,\ldots$, $\sum_{j=1}^mq_{i_j} \leq 3/4$, $m=2,\ldots,$ for any increasing subsequence $i_j$. Thus any single probability $q_i$ cannot exceed $1/2$ and the sum of any pair of $q$ probabilities cannot exceed $3/4$.

From Cor.~\ref{cor:one}, the sum-rate is equal to the entropy of  the partition created by the algorithm, which in turn cannot be smaller than the infimum of the entropy of any partition that respects the probability constraints described above. Since the entropy function is a concave function of the probability distribution, and the constraint set is closed and bounded, the infimum is achieved at one of the corners of the constraint set. 

Consider probability row vector $(\bo{p},\bo{q})$, defined in terms of row vectors $\bo{p}=(p_1,p_2,\ldots)$ and $\bo{q}=(q_1,q_2,\ldots)$. Vertices of the convex constraint set $\Xi$ are $(\pi_1(\bo{p}^*), \pi_2(\bo{q}^*))$ with $\bo{p}^*=(1/4,0,0,\ldots)$ and $\bo{q}^*=(1/2,1/4,0,0,\ldots)$ and $\pi_1$ and $\pi_2$ are permutations of the coordinates of their vector arguments.

Clearly $\inf_{(\bo{p},\bo{q}) \in \Xi} H(\bo{p},\bo{q})=3/2$ bits. \hfill{$\square$}
}
\end{solution}

\begin{remark}
Since there is  also a simple algorithm for computing $f$ that requires a sum rate of $3/2$ bits, the lower bound on the sum-rate is tight. Also, the lower bound can be achieved using one round of communication.
\end{remark}
\begin{remark}
Optimization problems of the kind considered in the above example arise in facility placement problems and are classified as  \emph{geometric programming} problems~\cite{BW:2004}.
\end{remark}
\subsection{Lower Bound for $\Pi_{min}(2)$}


We now consider the problem that appears in the nearest lattice point problem, namely $\Pi_{min}(2)$ in which  we work with the function 
\begin{equation}
f(\bx)=\left\{\begin{array}{cc}
1, & x_1 \geq x_2 \nonumber \\
0, & \mbox{otherwise}.
\end{array} \right.
\end{equation}
Interactive communication results in a rectangular partition $(\cP,\cQ)$ where  $\cP$ is a subpartition of the region $\cR_p=\{\bx~:~f(\bx)=1\}$ and $\cQ$ of $\cR_p=\{\bx~:~f(\bx)=0\}$. Since the error probability is zero, we refer to $(\cP,\cQ)$ as a \emph{zero-error} partition.  The boundary is represented by the set $\cB:=\{x_1=x_2\}\bigcap (0,1)^2$.  For a zero-error partition $(\cP,\cQ)$ it is true that each point of $\cB$  must be the upper left corner of some rectangle that lies entirely in $\cR_p$ or the lower right corner of some rectangle that lies entirely in $\cR_q$, except possibly for a set of one-dimensional measure zero. Let $\{p_i,~i=1,2,\ldots\}$ be the probabilities of  the   cells of $\cP$   and let $\{q_i,~i=1,2,\ldots\}$ be the  probabilities of the  cells of $\cQ$. 


\begin{figure}[h!]
\centering
\begin{tabular}{cc}
		\includegraphics[height=2.7cm]{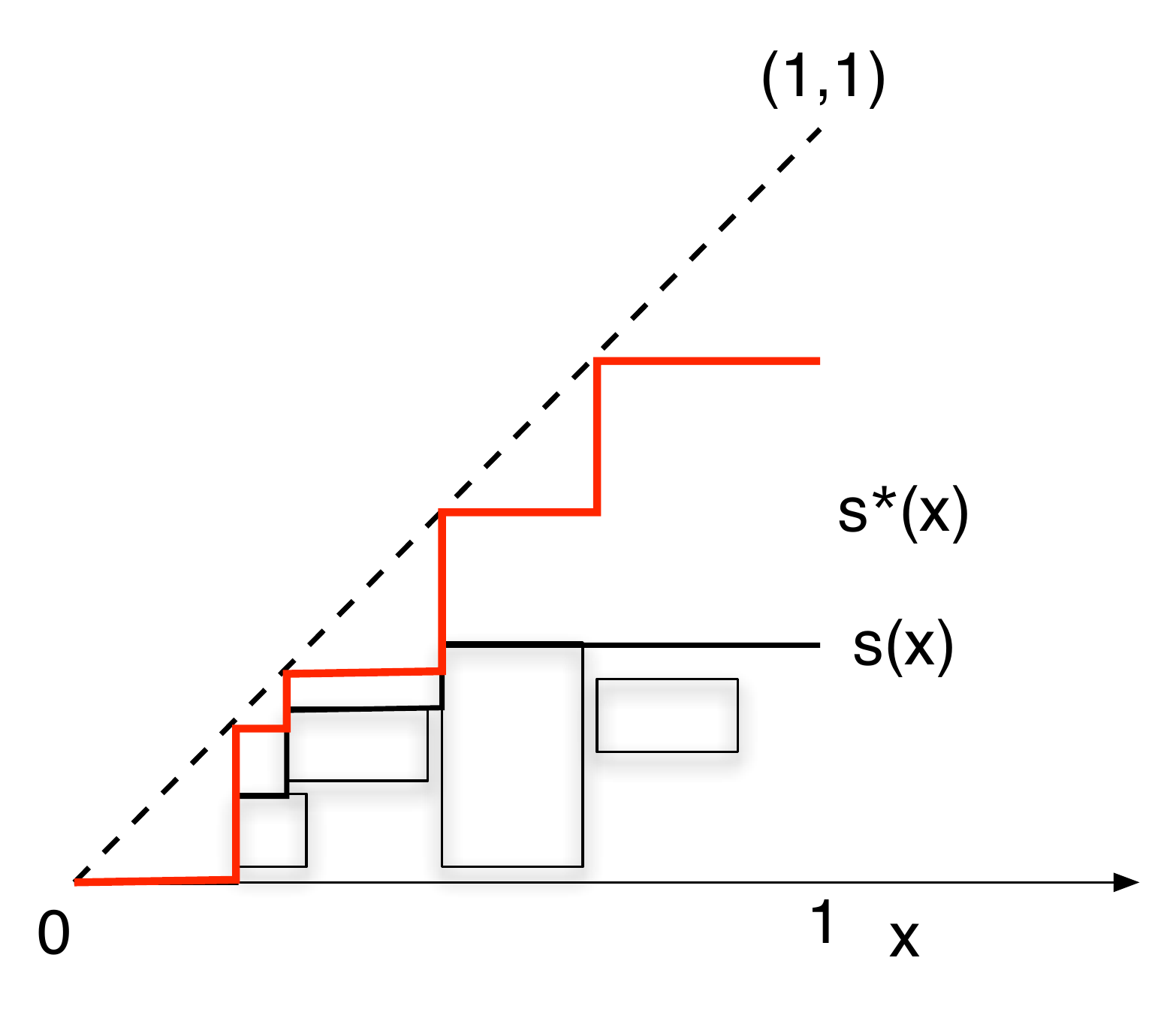}  &
		\includegraphics[height=2.7cm]{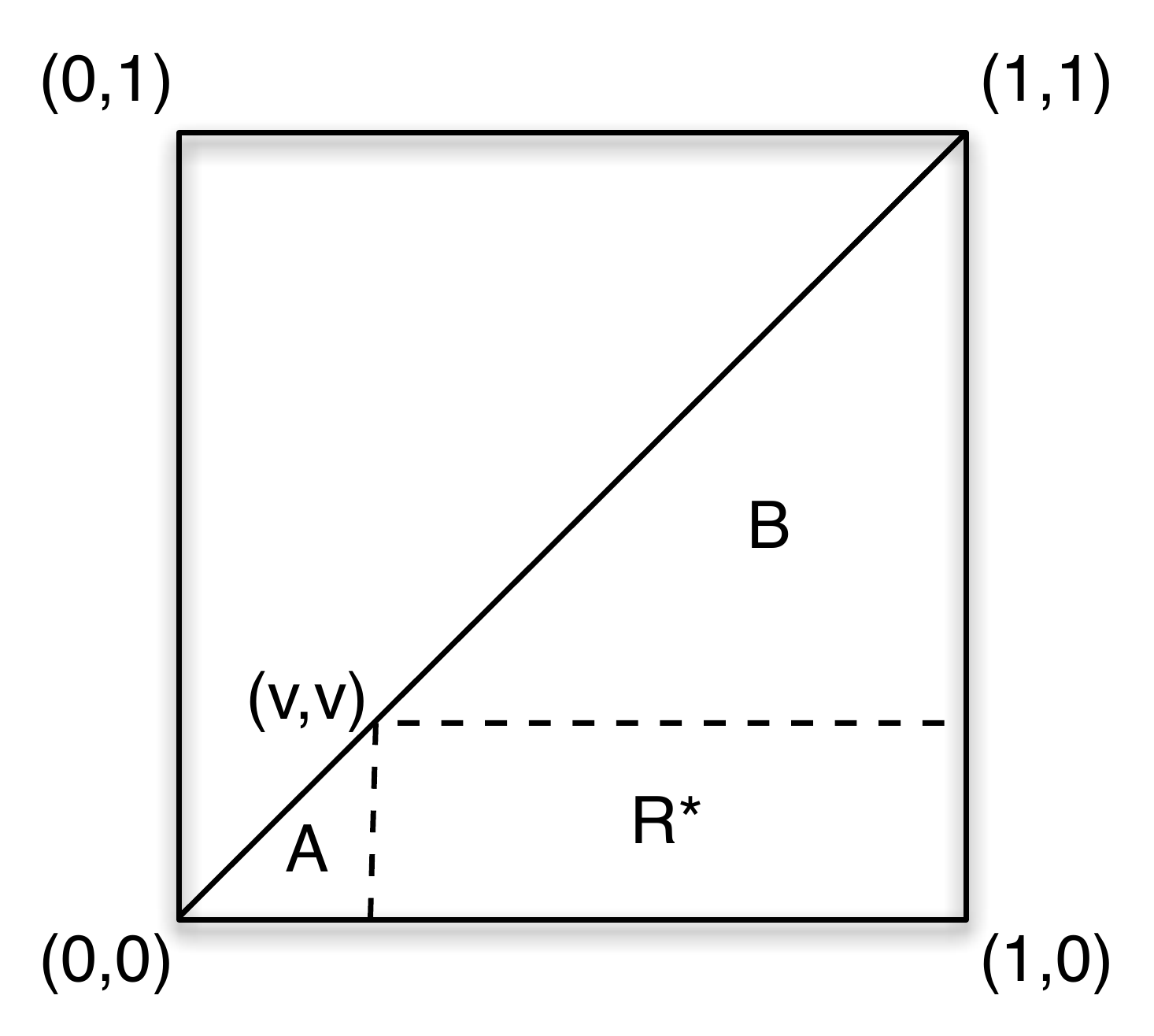}  \\
		(a) & (b)
		\end{tabular}
\caption{(a) Staircase functions used in the proof of Thm.~\ref{thm:constraints}, (b) Regions used in Thm.~\ref{thm:fourbits}. $R_p=A\bigcup R^* \bigcup B$.}
\label{fig:staircaseregions}
\end{figure}

\begin{theorem}
\label{thm:constraints}
The partition probabilities of a zero-error partition $(\cP,\cQ)$  satisfy the following constraints:
\begin{eqnarray}
\sum_{j=1}^m p_{i_j} & \leq  & \frac{m}{2(m+1)},~m=1,2,\ldots   \label{eqn:cp1}\\
\sum_{j=1}^m q_{i_j} & \leq & \frac{m}{2(m+1)},~m=1,2,\ldots  \label{eqn:cq1}\\
\sum_i p_i & = & 1/2,  \label{eqn:cp2} \\
\sum_i q_i  & = & 1/2, \label{eqn:cq2}
\end{eqnarray}
for any increasing subsequence of positive integers  $\{i_j\}$.
\end{theorem}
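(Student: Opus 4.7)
The partition identities (\ref{eqn:cp2}) and (\ref{eqn:cq2}) are immediate, since $\cP$ and $\cQ$ partition the two triangular regions $\cR_p$ and $\cR_q$, each of Lebesgue area $1/2$ under the uniform distribution on $[0,1]^2$. For the non-trivial inequalities (\ref{eqn:cp1}) and (\ref{eqn:cq1}), the $x_1 \leftrightarrow x_2$ reflection symmetry reduces the work to (\ref{eqn:cp1}). The claim then becomes a purely geometric assertion: \emph{any $m$ pairwise disjoint axis-aligned rectangles $R_i = [a_i,b_i]\times[c_i,d_i]$ contained in $\cR_p$ (equivalently, $d_i \le a_i$) have total area at most $m/(2(m+1))$}.

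The plan is to collapse this 2D packing bound onto a 1D staircase optimisation by passing through the \emph{upper envelope}
$$\phi(x) := \max\{d_i : a_i \le x \le b_i\},$$
with $\phi(x):=0$ when no rectangle contains $x$. Two observations do the heavy lifting. First, $d_i \le a_i \le x$ whenever $x \in [a_i,b_i]$, so $\phi(x) \le x$ on $[0,1]$. Second, at each fixed $x$, the cross-sectional intervals of the $R_i$ at $x$ all lie in $[0,\phi(x)]$, so the cross-sectional length $L(x)$ of $\bigcup_i R_i$ satisfies $L(x) \le \phi(x)$. Using pairwise disjointness of the $R_i$ to write $\sum_i |R_i| = \int_0^1 L(x)\,dx$ (Cavalieri's principle), we obtain
$$\sum_{i=1}^m |R_i| \;\le\; \int_0^1 \phi(x)\,dx.$$

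To finish, I bound $\int \phi$. Because $\phi$ takes values in the finite set $\{0,d_1,\ldots,d_m\}$ while lying below $x$, it is dominated pointwise by the monotone staircase $\psi(x) := \max\{v \in \{0,d_1,\ldots,d_m\} : v \le x\}$. After sorting the $d_i$ and setting $d_0:=0$, $d_{m+1}:=1$, a short calculation gives $\int_0^1 \psi = \sum_{i=1}^{m} d_i(d_{i+1}-d_i)$; this is a concave quadratic in the $d_i$'s whose unique maximiser, via the second-difference optimality conditions $d_{j+1}-2d_j+d_{j-1}=0$, is the arithmetic progression $d_j = j/(m+1)$, yielding the clean optimum $m/(2(m+1))$. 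The bound is tight on the canonical staircase configuration $R_j = \bigl[\tfrac{j}{m+1},\tfrac{j+1}{m+1}\bigr] \times \bigl[0,\tfrac{j}{m+1}\bigr]$.

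The key conceptual step is the 1D reduction via $\phi$; after that everything is routine. The only place where pairwise disjointness of the $R_i$ is used essentially is the Cavalieri identity $\sum_i|R_i| = \int L\,dx$. A minor degenerate case is when several $d_i$ coincide so that $\phi$ takes fewer than $m+1$ distinct values; the argument then yields the sharper bound $m'/(2(m'+1))$ with $m' < m$, and monotonicity of $m \mapsto m/(2(m+1))$ preserves the stated inequality. Constraint (\ref{eqn:cq1}) follows by the mirror argument applied to $\cQ$ after reflecting across the diagonal.
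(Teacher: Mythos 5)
Your proof is correct and essentially matches the paper's argument: both reduce the 2D packing bound to a 1D staircase below the diagonal and then maximize the resulting concave quadratic (same tridiagonal Hessian, same optimum $d_j=j/(m+1)$, same value $m/(2(m+1))$). Your treatment is slightly more explicit about why the staircase dominates, via the pointwise upper envelope $\phi$ and Cavalieri's principle, where the paper simply asserts the staircase bound.
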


\begin{proof}
Consider any rectangular partition of $\cR_p$ and consider any $m$ cells of the partition. Construct a non-decreasing staircase function, $s(x_1)$, whose height at $x_1$ is the maximum $x_2$ coordinate of any of the  cells of $\cP$ with a vertex that lies to the left of $x_1$ (Fig.~\ref{fig:staircaseregions}). Such a staircase function is piecewise constant over at most $m+1$ intervals, and thus partitions the interval $[0,1]$ into at most $m+1$ cells. The area of the union of the selected partition cells is upper bounded by the area of a modified staircase function $s^*(x_1)$ obtained by pushing each horizontal segment of $s(\cdot)$ upwards until it touches the boundary $\cB$.
 Let $(x_i,x_i),~i=1,2,\ldots,m$, $x_1 \leq  x_2  \leq \ldots \leq  x_m$ be the top left corners of the rectangular cover. The area of the rectangular cover, i.e. the area under the modified staircase function  is given by $x_1(1-x_1) +(x_2-x_1)(1-x_2) +(x_3-x_2)(1-x_3)+\ldots+(x_m-x_{m-1})(1-x_m)$. It is easy to check that the area is a concave function  of $x_1,x_2,\ldots,x_m$, since the Hessian matrix, $H$,  a symmetric Toeplitz $m \times m$ matrix with top row $(-2,~1,~0,\ldots,0)$, is negative definite.  This can be checked directly from the real quadratic form associated with this matrix, $x^t H x=-x_1^2-\sum_{i=2}^m (x_{i-1}-x_i)^2 -x_m^2$. This function is maximized by setting $x_i=i/(m+1)$, $i=1,2,\ldots,m$ and the maximum area is given by $m/2(m+1)$.
\end{proof}

If a single partition cell has probability $1/4$ it must be a square whose top left corner is $(1/2,1/2)$, and similarly if $m$ cells meet the upper bound on the sum of their areas, then the top left corners of the staircase function $s(\cdot)$ must be at the points $(x,x)$ with $x \in \{1/(m+1),2/(m+2),\ldots,m/(m+1)\}$. 

The set of constraints in Thm~\ref{thm:constraints} defines a polyhedron of probability vectors $(p,q)$, which contains the set of probability vectors as constrained by the partition, but the inclusion is strict. As an example, consider the point $(1/4,1/12,1/24,...)$, an extreme point of the the set of inequalities (\ref{eqn:cp1}--\ref{eqn:cq2}). This point is not realizable by any partition of a triangle since as soon as $P_1=1/4$, $P_2$ cannot be larger than $1/16$. A sufficiently tight characterization of the probabilities that respect the partition appears to be rather complicated, and in our attempts did not lead to a useful conclusion. However, for this problem, majorization~\cite{HJ:1985}, plays a significant role. For convenience we state the definition.

\begin{definition}
Let $p=(p_1,p_2,\ldots,)$ and $q=(q_1,q_2,\ldots,)$ be two probability vectors with probabilities in nonincreasing order. Then $p$ majorizes $q$, written $p\succeq q$, if $\sum_{i=1}^k p_i \geq \sum_{i=1}^k q_i$, for $k=1,2,\ldots$.
\end{definition}

\begin{lemma}\cite{CoverThomas:2006}
If $p \succeq q$ then $H(p) \leq H(q)$.
\end{lemma}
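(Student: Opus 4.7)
The plan is to invoke the Hardy--Littlewood--P\'olya characterization of majorization, namely that $p \succeq q$ if and only if there exists a doubly stochastic matrix $D=(D_{ji})$ with $q_j = \sum_i D_{ji}\, p_i$. Granting this characterization, the inequality $H(q) \geq H(p)$ follows from the concavity of $\phi(x):=-x\log x$ on $[0,1]$.

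First I would reduce to the finite-dimensional case by truncating both $p$ and $q$ to their first $n$ components and taking $n\to\infty$, since the probability vectors arising from the rectangular partitions in the previous section are countable sequences. In the finite-dimensional case, write $q_j = \sum_i D_{ji}\, p_i$ and apply Jensen's inequality coordinate-wise:
\begin{equation}
\phi(q_j) \;=\; \phi\!\left(\sum_i D_{ji}\, p_i\right) \;\geq\; \sum_i D_{ji}\,\phi(p_i).
\end{equation}
Summing over $j$ and swapping the order of summation, using that the columns of $D$ sum to $1$, gives
\begin{equation}
H(q) \;=\; \sum_j \phi(q_j) \;\geq\; \sum_i \phi(p_i)\sum_j D_{ji} \;=\; H(p),
\end{equation}
which is the desired bound.

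The only real obstacle is invoking Hardy--Littlewood--P\'olya itself, which is classical but nontrivial. For a self-contained argument I would instead use \emph{T-transforms} (Robin Hood transfers): $p \succeq q$ iff $q$ can be obtained from $p$ by a finite sequence of transfers of the form $(p_i,p_j)\mapsto(p_i-\epsilon,\,p_j+\epsilon)$ with $p_i>p_j$ and $0\leq \epsilon \leq (p_i-p_j)/2$. A direct two-term calculation using the concavity of $\phi$ shows each such transfer weakly increases $H$, and composing them yields the lemma; this route avoids the full HLP machinery and extends to the countably infinite setting by a routine truncation/limit argument.
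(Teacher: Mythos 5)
The paper does not prove this lemma---it is cited directly from Cover and Thomas---so there is no in-text argument to compare against. Your proof is the standard one: Shannon entropy is Schur-concave because $\phi(x)=-x\log x$ is concave, and the Hardy--Littlewood--P\'olya representation $q=Dp$ with $D$ doubly stochastic plus Jensen yields $\sum_j\phi(q_j)\geq\sum_j\sum_i D_{ji}\phi(p_i)=\sum_i\phi(p_i)$; the T-transform route is an equivalent and equally classical way to reach the same conclusion. Both are correct as finite-dimensional arguments.

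The one step you should not wave away is the extension to countably infinite probability vectors, which is the setting the paper actually works in: it is \emph{not} a ``routine truncation/limit argument.'' Truncating $p$ and $q$ to their first $n$ coordinates gives vectors of unequal total mass, and the natural repair of appending the respective tail masses does not in general preserve majorization. For instance, with $p=(0.5,0.3,0.2)$ and $q=(0.4,0.3,0.3)$ (so $p\succeq q$) and $n=1$, the augmented vectors are $(0.5,0.5)$ and $(0.4,0.6)$; after sorting, the second majorizes the first, and indeed $H(0.5,0.5)=1>H(0.4,0.6)\approx 0.971$. So the finite approximants can violate the very inequality you want to carry to the limit. The countable version of the lemma is true and standard, but its proof needs a different handling of the tail (or a direct argument for infinite sequences) than the one you sketched; simply citing it, as the paper does, is the cleaner move.
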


\begin{figure}[htbp] 
   \centering
   \begin{tabular}{cc}
   \includegraphics[width=2.9cm]{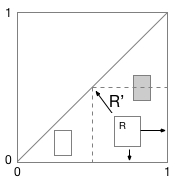} &
   		\includegraphics[height=2.9cm]{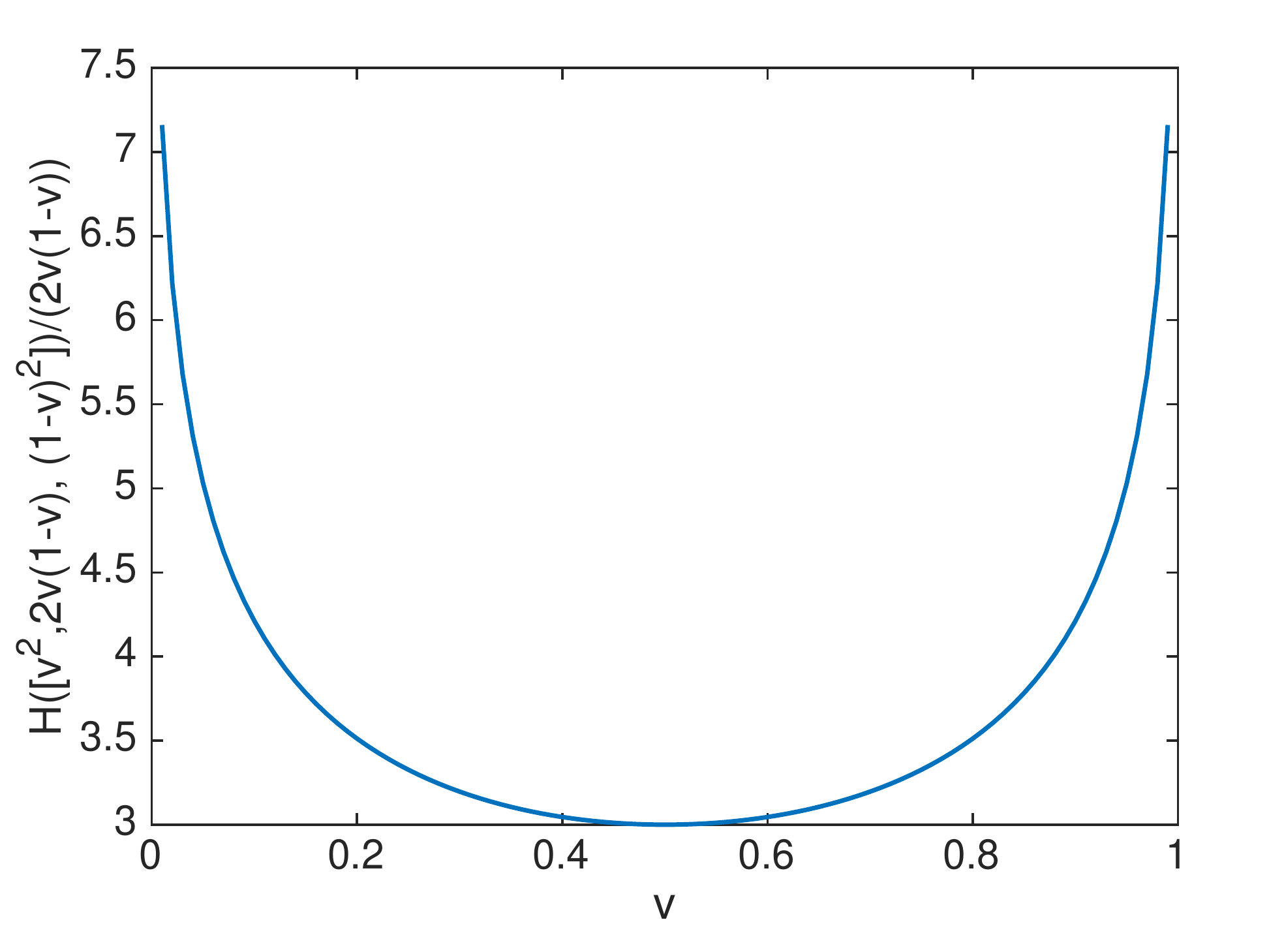}  
		\end{tabular} 

   \caption{(left) Illustration of the readjustment of the rectangle $R$ with the highest probability, (right) Plot of Eqn.~(\ref{eqn:entropyratio}).}
   \label{fig:Readjust}
\end{figure}

\begin{theorem}
\label{thm:pathconnected}
If a partition minimizes the entropy it contains a rectangle with vertices $(1,0)$ and   $(v,v)$ and  another rectangle with vertices $(0,1)$ and  $(u,u)$, for some $0<u,v <1$. 
\end{theorem}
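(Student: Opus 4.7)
The plan is to proceed by contradiction via a readjustment argument that concentrates probability mass in the largest cell. By the symmetry $(x_1,x_2)\leftrightarrow(x_2,x_1)$ that swaps $\cR_p$ and $\cR_q$, it suffices to establish the claim for $\cP$ and the corner $(1,0)$; the corresponding statement for $\cQ$ and $(0,1)$ then follows by reflection.

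I would first focus on the cell $R\in\cP$ of largest probability, which exists because the weights are summable to $1/2$. Since $R\subset\cR_p=\{x_1\geq x_2\}$, its upper-left corner $(a,b)$ satisfies $a\geq b$. The first step is to show that in a minimizer the upper-left corner of $R$ must lie on the diagonal $\cB$, i.e.\ $a=b$: otherwise the strip $[b,a]\times[0,b]$ is occupied by cells whose individual probabilities are strictly smaller than that of $R$, and absorbing those pieces into an enlarged copy of $R$ while re-tiling the residual region concentrates mass in the single largest cell. By the concavity of $-x\log x$ --- equivalently, the majorization lemma stated just above --- this strictly decreases the partition entropy and contradicts minimality. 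The same idea applied to the lower-right corner shows that $R$ must in fact extend all the way to $(1,0)$: if $R=[v,c']\times[d,v]$ with $c'<1$ or $d>0$, the readjustment sketched in the left half of Fig.~\ref{fig:Readjust} enlarges $R$ to $\tilde R=[v,1]\times[0,v]$ by absorbing the residual strips and yields a partition of strictly smaller entropy by the same majorization argument.

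The geometric observation that makes the re-tiling possible is the self-similarity of $\cR_p$: after removing any corner rectangle $[v,1]\times[0,v]$, what remains splits into two triangles geometrically similar to $\cR_p$ itself, so the residual region can be re-tiled using scaled copies of a valid zero-error sub-partition. This is also where the main technical difficulty lies, because a cell adjacent to $R$ need not lie entirely inside the strip being absorbed and so cannot simply be deleted. The cleanest way to overcome this is to describe the modified partition by the same combinatorial data as the original with only the ``corner'' parameter shifted, so that the entropy ratio $H(\cP')/H(\cP)$ admits a closed-form expression (the one plotted in the right half of Fig.~\ref{fig:Readjust}) which one can verify to be strictly less than $1$ whenever $R$ does not already have $(1,0)$ as a vertex. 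This gives the contradiction and, together with the reflected argument for $\cQ$, establishes the theorem.
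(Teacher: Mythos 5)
Your high-level plan matches the paper's: by contradiction, take the cell $R\in\cP$ of largest probability, enlarge it by pushing its faces outward until its corners reach the diagonal and $(1,0)$, and invoke majorization (and the symmetry across $x_1=x_2$ for the $\cQ$ side). You also correctly flag the central technical difficulty --- cells that only partially intersect the enlarged rectangle $R'$ --- so the basic mechanism is right.

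Where the proposal departs from the paper, and becomes shaky, is in how that difficulty is resolved. You propose to \emph{re-tile} the residual region with scaled copies of a valid sub-partition and then compare entropies via a closed-form ratio $H(\cP')/H(\cP)$, citing the right panel of Fig.~\ref{fig:Readjust}. That plot is Eqn.~(\ref{eqn:entropyratio}), which belongs to Theorem~\ref{thm:fourbits} (the conditional entropy as a function of $v$), not to any readjustment-entropy ratio; no such closed form is derived or needed here. The paper's proof is in fact simpler and requires no re-tiling at all: a cell wholly absorbed by $R'$ gets probability $0$, and a cell $j$ partially absorbed simply has its probability reduced to $(1-\alpha_j)p_j$, the mass outside $R'$, while $R'$ gains $\sum_j \alpha_j p_j$. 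One then checks directly that the sorted new vector $p''$ majorizes $p$ (the leading entry only grows, and partial sums can only increase), so $H(p'')\le H(p)$ by the stated lemma --- no geometric re-tiling, no self-similar recursion, no entropy ratio. Finally, you claim the readjustment ``strictly decreases'' entropy; the majorization argument only gives $H(p'')\le H(p)$, which is also all the paper establishes (and all that is needed to assert the existence of a minimizer with the claimed rectangles). If you want strictness you would need a separate argument.
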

\begin{proof}
Proof is by contradiction. Suppose we have a partition $\cP$ for the triangular region $\cR_p$, in which a rectangle $R$ with the largest probability does not have vertices $(u,u)$ and $(1,0)$. Construct a new partition from $\cP$ by moving the faces of the rectangle $R$ outwards, creating a new rectangle $R'$ which contains $R$, as illustrated in Fig.~\ref{fig:Readjust}(left). The process does not create any new cells, any cell that now lies in $R'$ has its probability reduced to zero, and any cell partially intersected by $R'$ has its probability reduced to the part outside $R'$. Thus, if a cell other than $R$ is affected, say cell $j$, then its probability $P'_j=(1-\alpha)P_j$,  where $0 < \alpha < 1$. Also,  $\alpha P_j$  is added to $P_1$. Let $\mathcal J$ denote the set of affected cells (other than $R$). Then $P'_1=P_1+\sum_{j \in \mathcal J}\alpha_jP_j$  and $P'_j=(1-\alpha_j)P_j$, $j \in \mathcal J$. Suppose the probabilities of $\cP$ arranged in nonincreasing order are $p=(p_1,p_2,\ldots)$. Let $p'=(p_1',p_2',\ldots)$ denote the probabilities of the new partition and let $p''$ be obtained by sorting $p'$ in nonincreasing order. Let $[k]=\{1,2,\ldots,k\}$. Then $p''_1=p'_1 \geq p_1$ and $\sum_{i=1}^k p''_i \geq \sum_{i=1}^k p'_i = p_1+\sum_{j \in \mathcal J}\alpha_jp_j + \sum_{j \in [k]\bigcap {\mathcal J}^c}p_j +\sum_{j \in [k]\bigcap {\mathcal J}}(1-\alpha_j)p_j=\sum_{j=1}^k p_j +\sum_{j \in [k]^c \bigcap \mathcal J}(1-\alpha_j)p_j \geq \sum_{j=1}^k p_j$. Thus $p'' \succeq p$ and $H(p'') \leq H(p)$.
\end{proof}

\begin{theorem}
The minimum single-shot interactive communication cost of the $\Pi_{min_2}$ problem is \emph{four} bits.
\label{thm:fourbits}
\end{theorem}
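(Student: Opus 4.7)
The plan is a self-similarity recursion combined with a classical binary-entropy inequality. By Corollary~\ref{cor:one} the sum rate equals $H(p,q)$, where $(p,q)$ are the cell probabilities of the rectangular zero-error partition $(\cP,\cQ)$ produced by the protocol, with $\cP$ covering $\cR_p$ and $\cQ$ covering $\cR_q$. Since $\Pi_{min}(2)$ is invariant under $x_1 \leftrightarrow x_2$, the two sub-entropies $\sum_i p_i\log_2(1/p_i)$ and $\sum_j q_j\log_2(1/q_j)$ can be minimized independently over their respective triangles and attain the same infimum, so I would reduce the theorem to showing that this common infimum is at least $2$ bits; doubling then yields the $4$-bit lower bound, matching the achievability of Section~\ref{sec:bitexchange}.

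To lower-bound the $\cP$-contribution, let $E(P)$ denote the infimum of $-\sum_i a_i\log_2 a_i$ over rectangular zero-error partitions of a right triangle similar to $\cR_p$ of area $P$. Because every such triangle is the image of $\cR_p$ under a similarity of ratio $\sqrt{2P}$, partitions scale bijectively with cell areas multiplied by $2P$; a direct computation of the entropy shift under rescaling then forces
\[
E(P) \;=\; -P\log_2 P + cP,\qquad c := E(1).
\]
Applying Theorem~\ref{thm:pathconnected} (through this scaling, to a similar triangle of any area), an entropy-minimizing partition contains a corner rectangle $R^*$ with diagonally opposite vertices at the right-angle corner and at a point $(v,v)$ on the hypotenuse; removing $R^*$ splits the complement into two similar sub-triangles of areas $Pv^2$ and $P(1-v)^2$, each of which must itself be optimally partitioned. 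This yields the recursion
\[
E(P) \;=\; \inf_{v\in(0,1)}\left[-2Pv(1-v)\log_2(2Pv(1-v)) + E(Pv^2) + E(P(1-v)^2)\right].
\]

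Substituting the closed form of $E$ into this recursion and using $v^2+(1-v)^2+2v(1-v)=1$ cancels all $\log_2 P$ terms, and the recursion collapses to a single self-consistency equation $c = \inf_{v\in(0,1)}\Phi(v)$ for an explicit function $\Phi$. The key algebraic move is to multiply through by $v(1-v)$ and regroup the logarithms via the identities $v(1-v)+v^2 = v$ and $v(1-v)+(1-v)^2 = 1-v$; the inequality $\Phi(v)\geq 3$ then becomes equivalent to the classical Bernoulli bound
\[
H_2(v) \;\geq\; 4v(1-v),\qquad v\in[0,1],
\]
i.e.\ $1-H_2(v)\leq (1-2v)^2$, with equality exactly at $v\in\{0,\tfrac12,1\}$. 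Granted this, $c=3$ and $E(1/2)=1/2+3/2=2$ bits, so $H(p,q)\geq 4$ bits, completing the converse. I expect the main obstacle to be precisely this algebraic collapse to the binary-entropy inequality; once spotted, the inequality itself is routine, verifiable either by a direct second-derivative analysis of $H_2(v)-4v(1-v)$ on each half of $[0,1]$, or by quoting Pinsker's inequality for binary distributions.
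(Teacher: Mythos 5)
Your proposal is correct and, at its core, takes the same route as the paper: both arguments invoke Theorem~\ref{thm:pathconnected} to force a corner rectangle $R^*$ with a vertex $(v,v)$ on the hypotenuse, decompose the triangle into $R^*$ and two similar sub-triangles, use self-similarity to turn the decomposition into a fixed-point equation, and then minimize the resulting one-variable expression. Your explicit scaling law $E(P)=-P\log_2 P+cP$ is simply an unwound form of the paper's conditional-entropy recursion $H(\cP,\cQ\mid C=1)=H([v^2,2v(1-v),(1-v)^2])+(v^2+(1-v)^2)\,H(\cP,\cQ\mid C=1)$; after substitution both collapse to $c=\inf_v H([v^2,2v(1-v),(1-v)^2])/(2v(1-v))$, i.e., equation~(\ref{eqn:entropyratio}), and the split of the unit square into two symmetric triangles is the paper's outer decomposition over $C$.

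Two remarks. First, a genuine value-add: your observation that, via $H([v^2,2v(1-v),(1-v)^2])=2H_2(v)-2v(1-v)$, the condition $\Phi(v)\ge 3$ is \emph{equivalent} to the classical bound $H_2(v)\ge 4v(1-v)$ is a cleaner closing step than the paper's appeal to the plot in Fig.~\ref{fig:Readjust}, and it pins down exactly where and why equality holds. Second, a small error: Pinsker's inequality goes the \emph{wrong} way here. Pinsker gives a lower bound on $D\big(\mathrm{Bern}(v)\,\|\,\mathrm{Bern}(1/2)\big)=1-H_2(v)$ in terms of $(1-2v)^2$, whereas what you need is the upper bound $1-H_2(v)\le(1-2v)^2$ (a reverse-Pinsker--type statement). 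So drop the Pinsker citation; the direct second-derivative argument you also mention (or the concavity of $H_2$ against the parabola with matching value and slope at $v=1/2$) is the right tool and closes the proof.
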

\begin{proof}
Consider an extreme partition $(\cP,\cQ)$ which contains a rectangle $R^*$ which has a points $(v,v)$ and $(1,0)$ as its upper left and lower right vertices.  This is always true by Thm.~\ref{thm:pathconnected}. Let random variable $C$ indicate whether $(x_1,x_2)$ lies in $\cR_p$ or not, and let random variable $S$ indicate whether $(x_1,x_2)$ lies in one of the three regions, $R^*$, $A$ or $B$ as shown in Fig.~\ref{fig:staircaseregions}(b). Let $H(\cP,\cQ))$ denote the entropy of the partition $(\cP,\cQ)$. Then 
\begin{eqnarray}
H(\cP,\cQ) & = & H(C)+H(\cP,\cQ|C=0)P(C=0)+ \nonumber \\
& & ~~H(\cP,\cQ|C=1)P(C=1) 
\label{eqn:firstdecomp}
\end{eqnarray}
and
\begin{eqnarray}
\lefteqn{H(\cP,\cQ|C=1)= H(S|C=1)+} & \nonumber \\ 
& +H(\cP,\cQ|C=1,S=A)P(S=A|C=1)+ \nonumber \\
& + H(\cP,\cQ|C=1,S=B)P(S=B|C=1). 
\end{eqnarray}
Since the regions $A$ and $B$ are similar to $\cR_p$ it follows that if this partition minimizes the entropy it must satisfy the recursion
\begin{eqnarray}
\lefteqn{H(\cP,\cQ|C=1)  =  H([v^2,2v(1-v),(1-v)^2])+} & & \nonumber \\
&  & ~~~~~+ (v^2+(1-v)^2)H(\cP,\cQ|C=1).
\end{eqnarray}
Solving for $H(\cP,\cQ|C=1)$ we obtain
\begin{equation}
H(\cP,\cQ|C=1)=\frac{H([v^2,2v(1-v),(1-v)^2])}{2v(1-v)}
\label{eqn:entropyratio}
\end{equation}
whose unique minimum value of $3$ bits occurs when $u=v=1/2$ (see Fig.~\ref{fig:Readjust}). Plugging back in (\ref{eqn:firstdecomp}) leads to the desired result.
\end{proof}

\section{Summary and Conclusions}
\label{sec:summary}
A lower bound on the communication complexity of $f(x_1,x_2)=\min(x_1,x_2)$ has been derived for the two-party, single shot case, interactive case with an unbounded number of rounds of communication, when $\bx=[x_1,x_2]$ is uniformly distributed on the unit square. This lower bound has been derived by showing that the amount of communication required is equal to the entropy of the partition created by the communication between the two parties, and then deriving a lower bound on the entropy of  rectangular partitions that are constrained by the geometry of the problem. The problem is shown to be related to geometric programming problems encountered in optimizing facility locations.


\begin{thebibliography}{1}


\bibitem{AhlCai:1994} R.~Ahlswede and N. Cai. "On communication complexity of vector-valued functions," IEEE Transactions on Information Theory, vol. 40, no. 6, pp. 2062-2067, Nov. 1994. 


\bibitem{babai} \label{babai}
L.~Babai. {``On Lov\'asz lattice reduction and the nearest lattice point problem''}, Combinatorica, vol. 6, No. 1, pp. 1-13.  1986.



\bibitem{BVC:2017} M. F. Bollauf, V. A. Vaishampayan and S. I. R. Costa, "On the communication cost of determining an approximate nearest lattice point," 2017 IEEE International Symposium on Information Theory (ISIT), Aachen, 2017, pp. 1838-1842.
doi: 10.1109/ISIT.2017.8006847.

\bibitem{BW:2004} S. Boyd and L. Vandenberghe. \emph{Convex optimization}. Cambridge University Press, Cambridge, UK, 2004.


\bibitem{Braverman:2012} M.~Braverman, ``Coding for interactive computation: progress and challenges,'' Proc.  50th Annual Allerton Conference on Communication, Control and Computing, pp. 1914-1921, Oct. 2012.

\bibitem{SPLAG} J.~H.~Conway and N.~J.~A.~Sloane, \emph{Sphere Packings, Lattices and Groups}, 3rd ed., Springer-Verlag, New York, 1998.

\bibitem{CoverThomas:2006} T. M. Cover and J. A. Thomas, \emph{Elements of Information Theory}, 2nd ed., John Wiley and Sons, Hoboken, NJ, 2006.


  





%

%
%
%
%

\bibitem{HJ:1985} R. A. Horn and C. J. Johnson, \emph{Matrix Analysis}, Cambridge University Press, Cambridge, U.K., 1985.

\bibitem{Koch:2016} 
T. Koch and G. Vazquez-Vilar, "A general rate-distortion converse bound for entropy-constrained scalar quantization," 2016 IEEE International Symposium on Information Theory (ISIT), Barcelona, 2016, pp. 735-739.
doi: 10.1109/ISIT.2016.7541396

\bibitem{KornerMarton}
J.~Korner and K.~Marton. {``How to encode the modulo-two sum of binary sources''}, IEEE Transactions on Information Theory 25(2), 219-221. \hskip 1em plus  0.5em minus 0.4em \relax 1979. 

\bibitem{KushNis:1997}
E. Kushilevitz and N. Nisan, \emph{Communication Complexity.} Cambridge,
U.K.: Cambridge Univ. Press, 1997.



\bibitem{MaIshwar:2009}
N. Ma and P. Ishwar, ``Infinite-message distributed source coding
for two-terminal interactive computing'', 47th Annual Allerton Conf. on Communication, Control, and Computing, Monticello, IL,Sept. 2009.

\bibitem{Misra:2011}
V. ~Misra, V.~K. Goyal, and L.~ R.~ Varshney. ``Distributed scalar quantization for computing: High-resolution analysis and extensions'', IEEE Transactions on Information Theory, vol. 57, No. 8, pp. 5298-5325, Aug.~2011.

\bibitem{MaIshwar:2011}
N.~Ma, and P.~Ishwar. ``Some results on distributed source coding for interactive function computation''. IEEE Transactions on Information Theory, vol  57, No. 9, pp. 6180-6195. Sept. 2011. 


%
%

\bibitem{Orlitsky:1990} A. Orlitsky, ``Worst-case interactive communication I: Two messages are almost optimal,'' IEEE Transactions on Information Theory, vol. 36, No. 5, pp. 1111-1126, Sep. 1990.

\bibitem{Orlitsky:1992} A. Orlitsky,  ``Average-case interactive communication,'' IEEE Transactions on Information Theory, vol. 38, No. 5, pp.1534-1547. Sep. 1992.

\bibitem{Orlitsky:2001}
A.~Orlitsky and J.~R.~Roche, ``Coding for Computing'', IEEE Transactions on Information Theory, vol. 47, no. 3, pp. 903--917, March 2001. 





\bibitem{VB:2017} V. A. Vaishampayan and M. F. Bollauf, "Communication cost of transforming a nearest plane partition to the Voronoi partition," 2017 IEEE International Symposium on Information Theory (ISIT), Aachen, 2017, pp. 1843-1847.
doi: 10.1109/ISIT.2017.8006848

\bibitem{Wit:1976} H. Witsenhausen, ``The zero-error side information problem and chromatic numbers,''  IEEE Transactions on Information Theory, vol.  22, no. 5 pp. 592-593, Sept. 1976.

\bibitem{Yao:1979} \label{Yao:1979}
A. C.~Yao, ``Some Complexity Questions Related to Distributive Computing(Preliminary Report)''. In Proceedings of the Eleventh Annual ACM Symposium on Theory of Computing, STOC '79, 209-213.  1979.

\bibitem{Yam:1982} H. Yamamoto, ``Wyner-Ziv theory for a general function of the correlated
sources'',  IEEE Trans. Inf. Theory, vol. 28, no. 5, pp. 803--807, Sept. 1982.
\end{thebibliography}
\end{document}